\documentclass[a4paper,english,numberwithinsect, cleveref]{lipics-v2021}
\hideLIPIcs
\usepackage{microtype} % if unwanted, comment out

\title{Geometric Give and Take\footnote{%Supported by
J.T.\ supported by Charles Univ.\ projects UNCE 24/SCI/008 and PRIMUS 24/SCI/012, and by GAČR project 25-17221S.
}}
\titlerunning{Geometric Give and Take}

\author{Oswin Aichholzer}{Graz University of Technology}
  {oswin.aichholzer@tugraz.at}{}{}
\author{Katharina Klost}{ Freie Universit\"{a}t Berlin and Universität Tübingen}{kathklost@inf.fu-berlin.de}{}{}
\author{Kristin Knorr}{ Freie Universit\"{a}t Berlin}{knorrkri@inf.fu-berlin.de}{}{}
\author{Viola M\'{e}sz\'{a}ros}{Bolyai Institute, University of Szeged}
  {meszaros.viola@gmail.com}{}{}
\author{Josef Tkadlec}{Charles University}{josef.tkadlec@iuuk.mff.cuni.cz}{}{}
\authorrunning{O. Aichholzer, K. Klost, K. Knorr, V. M\'{e}sz\'{a}ros and J. Tkadlec}
\Copyright{Oswin Aichholzer, Katharina Klost, Kristin Knorr, Viola M\'{e}sz\'{a}ros and Josef Tkadlec}
\ArticleNo{100}
\ccsdesc[500]{Theory of computation~Computational geometry} %TODO mandatory: Please choose ACM 2012 classifications from https://dl.acm.org/ccs/ccs_flat.cfm 

\keywords{Line Arrangements, Geometric Games}

\nolinenumbers

\usepackage{todonotes}
\usepackage{enumitem}

\usepackage{aliascnt}
\makeatletter
\let\corollary\@undefined
\let\endcorollary\@undefined
\let\lemma\@undefined
\let\endlemma\@undefined
\let\obs\@undefined
\let\endobs\@undefined
\let\claim\@undefined
\let\endclaim\@undefined
\makeatother
\theoremstyle{plain}
\newaliascnt{lemma}{theorem}
\newtheorem{lemma}[lemma]{Lemma}
\aliascntresetthe{lemma}

\newaliascnt{corollary}{theorem}
\newtheorem{corollary}[corollary]{Corollary}
\aliascntresetthe{corollary}

\newaliascnt{obs}{theorem}
\newtheorem{obs}[obs]{Observation}

\aliascntresetthe{obs}

\theoremstyle{claimstyle}
\newaliascnt{claim}{theorem}
\newtheorem{claim}[claim]{Claim}
\aliascntresetthe{claim}

\crefname{claim}{Claim}{Claims}
\Crefname{claim}{Claim}{Claims}

\begin{document}

\maketitle
\begin{abstract}
      We consider a special, geometric case of a balancing game introduced by Spencer in 1977. Consider any arrangement $\mathcal{L}$ of $n$ lines in the plane, and assume that each cell of the arrangement contains a box. Alice initially places pebbles in each box. In each subsequent step, Bob picks a line, and Alice must choose a side of that line, remove one pebble from each box on that side, and add one pebble to each box on the other side. Bob wins if any box ever becomes empty. We determine the minimum number $f(\mathcal L)$ of pebbles, computable in polynomial time, for which Alice can prevent Bob from ever winning, and we show that $f(\mathcal L)=\Theta(n^3)$ for any arrangement $\mathcal{L}$ of $n$ lines in general position.
\end{abstract}

\section{Introduction}
Inspired by the following game that appeared in the shortlist of problems for the International Math Olympiad in 2019 \cite{imo2019comittee60thInternationalMathematical2019}, we consider a special case of the balancing game introduced by Spencer~\cite{Spencer1977balancing} in 1977.
The game \cite{imo2019comittee60thInternationalMathematical2019} is
played with \(n\) boxes that are placed in a row and each box can contain a number of pebbles. 
There are two players: Alice and Bob. 
Initially, Alice places \(p_i\) pebbles into box \(i\).
Then the game proceeds in rounds: In each round, first Bob picks a line \(\ell_i\) with \(1\leq i < n\) that separates boxes \(1,\dots, i\) from boxes \(i+1,\dots, n\).
Second, Alice chooses one side of the line, removes one pebble from each box on that side, and adds one pebble to each box on the opposite side. %other side of the line.
(So, if Alice picked the left side of line $\ell_i$, then she removes one pebble from each box \(1,\dots, i\) and adds one pebble to boxes \(i+1,\dots, n\) each.)
The game ends with a win for Bob once any of the boxes becomes empty.
Alice's goal is to prevent Bob from winning.
The problem asks to determine the smallest total number of pebbles that Alice initially needs to put in the boxes, so that she can prevent Bob from winning.
Denoting this number by \(f_{1D}(n)\), it can be shown that \(f_{1D}(n)=\frac14n(n+4)\) if $n$ is even and \(f_{1D}(n)=\frac14(n+1)(n+3)-1\) if $n$ is odd, see~\cite{imo2019comittee60thInternationalMathematical2019}. In particular, \(f_{1D}(n)=\Theta(n^2)\).

In our work, we generalize the game to a 2-dimensional setting as follows:
Let \(\mathcal{L} = \{\ell_1,\dots, \ell_n\}\) be a set of $n$ lines in the plane.
Those lines split the plane into up to $\binom{n+1}{2}+1$ regions.
Each region contains one box.
As in the 1-dimensional problem, Alice first places pebbles in the boxes. The game then proceeds in rounds with Bob always picking a line from~\(\mathcal{L}\), and Alice picking a side, removing one pebble from each box on that side and adding a pebble to each box on the other side, see \cref{fig:one_round} for an example.
Given a line arrangement $\mathcal{L}$, let \(f(\mathcal{L})\) be the smallest number of pebbles such that Alice can distribute them to the boxes and prevent Bob from winning. The 1-dimensional setting corresponds to the case in which $\mathcal{L}$ consists of $n-1$ parallel lines.

\begin{figure}
    \centering
    \includegraphics[width=\linewidth]{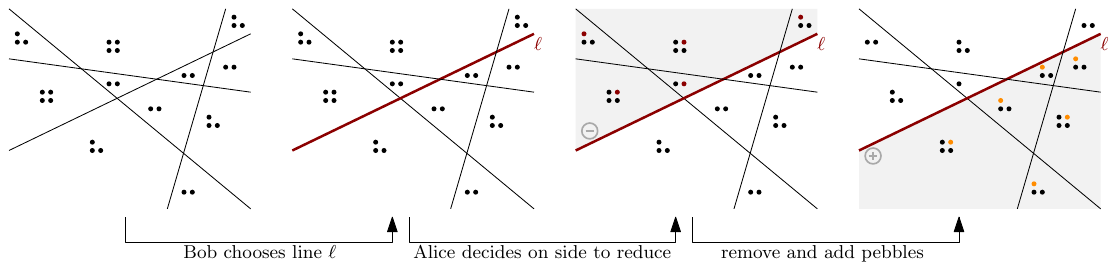}
    \caption{Example for one round of the 2-dimensional game}
    \label{fig:one_round}
\end{figure}

As our main result, we determine %the value
\(f(\mathcal{L})\) for any line arrangement $\mathcal{L}$.
A key notion 
here 
is measuring how (un)evenly the regions in the line arrangement are split by the respective lines.
More formally, given a line $\ell$ in a line arrangement~$\mathcal{L}$, we define the \textit{rank} of $\ell$, denoted $c_\ell$, to be the number of regions on the side of $\ell$ 
that contains
at most half of all the regions.

\begin{theorem}\label{thm:formula} Let $\mathcal{L}$ be a line arrangement that splits the plane into $R$ regions. Then $f(\mathcal{L})=R+\sum_{\ell\in\mathcal{L}} c_\ell$.
\end{theorem}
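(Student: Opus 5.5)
The plan is to reduce the game to a game on integer "offsets" per line and then prove the two inequalities separately. For a line $\ell$ let $A_\ell$ be a side containing at most half of the regions, so $|A_\ell|=c_\ell$, and let $B_\ell$ be the other side. Let $v_\ell\in\{\pm1\}^R$ be $-1$ on $A_\ell$ and $+1$ on $B_\ell$. Each move on $\ell$ adds $\pm v_\ell$ to the vector of box contents, and moves on different lines commute. Hence after any sequence of moves the configuration is $q=p+\sum_\ell s_\ell v_\ell$, where $s_\ell\in\mathbb Z$ is the net signed count of moves on $\ell$. Bob chooses which $s_\ell$ changes, and Alice chooses the direction $\pm1$. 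Bob wins as soon as some $q_r\le 0$.

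\emph{Upper bound.} Alice puts $p_r=1+|\{\ell: r\in A_\ell\}|$ pebbles in box $r$, for a total of $R+\sum_\ell c_\ell$. Her strategy keeps every $s_\ell\in\{0,1\}$: when Bob picks $\ell$ she toggles $s_\ell$, taking from $A_\ell$ if $s_\ell=0$ and from $B_\ell$ if $s_\ell=1$. Each box $r$ then has $q_r\ge p_r-|\{\ell:r\in A_\ell\}|\ge1$ at all times. So $f(\mathcal L)\le R+\sum c_\ell$. Ties $c_\ell=R/2$ cause no problem.

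\emph{Lower bound.} We must show that Bob wins whenever $\sum_r p_r<R+\sum_\ell c_\ell$. The natural target is a per-box deficit statement. For every box $r$ and every "orientation" $\sigma$ choosing a side $S_\ell\ni r$ or $S_\ell\not\ni r$ per line, Bob should be able to force a state where $r$ has lost one pebble for each line whose removal side contains it. Then one averages over boxes to get the total. A cleaner route I would try first is a potential argument. Fix a probability-like weighting and track
\[
\Phi=\sum_r q_r-\sum_\ell \bigl(c_\ell\,[s_\ell=0]+ (R-c_\ell)\,[s_\ell\neq 0]\bigr)\cdot(\text{suitable correction}),
\]
and show that Bob, by always picking a line with $s_\ell=0$ and then re-picking lines whose offset sits on the small side, can make the total number of pebbles on "critical" boxes drop. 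The intended Bob strategy works as follows. He picks each line once, so that all $s_\ell$ are odd. Whenever Alice has chosen $s_\ell=+1$ (removing from the large side), Bob re-picks $\ell$. Alice must then either go to $0$, which undoes her gain and lets Bob repeat, or go to $+2$. Going to $+2$ doubles the loss on $B_\ell$, which has at least half of the regions. By parity, Bob can always keep every $s_\ell$ odd at the moments he inspects, and any all-odd state satisfies $q_r\le p_r+|\{\ell: r\in B_\ell,\ s_\ell>0\}|+\dots$. The aim is to show that some box reaches $\le0$ whenever $\sum p_r<R+\sum c_\ell$, by summing $q_r\ge1$ over all boxes in a well-chosen family of states and comparing with $\sum p_r$.

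The main obstacle is clearly this lower bound, that is, designing Bob's adaptive strategy so that the counting closes exactly at $R+\sum_\ell c_\ell$ rather than at a weaker bound. My first attempt would be an induction on the number of lines. Remove a line $\ell$ and let Bob play an optimal strategy for $\mathcal L\setminus\{\ell\}$ on the merged regions, interleaved with plays of $\ell$. The goal is to show that $\ell$ costs Alice at least $c_\ell$ additional pebbles plus one for each region it splits. This matches the formula: adding $\ell$ increases $R$ by the number of regions it cuts, and it adds the term $c_\ell$.
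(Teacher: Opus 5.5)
Your upper bound is correct and is the paper's autopilot strategy. The lower bound, which is the substance of the theorem, is not proved, and each route you sketch has a concrete problem. The potential $\Phi$ contains an unspecified ``suitable correction'', so there is nothing to check.

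The parity strategy has its signs reversed. With $v_\ell=-1$ on $A_\ell$, the value $s_\ell=+1$ means Alice took from the \emph{small} side, and $+2$ doubles the loss on $A_\ell$, not on $B_\ell$. Every re-pick flips the parity of $s_\ell$, so Bob cannot keep all $s_\ell$ odd while re-picking. Worse, re-picking a single line against an Alice who toggles $s_\ell$ between $0$ and $1$ just cycles; that is exactly how the autopilot survives, so no progress can come from one line at a time.

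The induction rests on a wrong count. Adding a line $\ell$ that cuts $k$ regions also changes the ranks $c_{\ell'}$ of the other lines, because each cut region is duplicated on one side of every $\ell'$. For instance, one line gives $R+\sum c_\ell=3$ and two crossing lines give $4+2+2=8$, an increment of $5$, not $k+c_\ell=4$. You also do not say how a Bob strategy on the merged regions would be lifted back to $\mathcal L$.

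The missing idea is a way for Bob to make Alice pay whenever she refuses to take from the box he is attacking. Your ``per-box deficit'' remark is the right start. Since $P<R+\sum_\ell c_\ell$, pigeonhole gives a focal box $\mathbf b$ with $p(\mathbf b)\le A$, where $A$ is the number of lines having $\mathbf b$ on their smaller side. For lines of rank exactly $R/2$, the smaller side is fixed by a tie-break: it is the half not containing $b_1$.

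The paper orders these $A$ lines by decreasing rank, breaking ties lexicographically by the characteristic vectors of their smaller sides. Bob plays them like a stack: after Alice takes from the smaller side of $\ell_i$ he plays $\ell_{i+1}$, and after she takes from the larger side he steps back to $\ell_{i-1}$. If Alice ever takes from the smaller sides of all $A$ lines net, $\mathbf b$ is empty. Otherwise the play decomposes into pairs: take from the smaller side of $\ell_i$, then from the larger side of $\ell_{i+1}$. Each pair leaves $\mathbf b$ unchanged and changes the total by $2(c_{\ell_{i+1}}-c_{\ell_i})\le 0$. A retreat at $\ell_1$ changes the total by $2c_{\ell_1}-R\le 0$.

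These changes can be $0$ (equal ranks, or $c_{\ell_1}=R/2$), so tracking the total alone cannot work. For a single line every move preserves the total, yet Bob wins against $2$ pebbles. The paper therefore adds the lexicographic order of the distribution as a secondary monovariant, which strictly decreases in those cases. Since the pair (total, lexicographic order) cannot decrease forever, Bob eventually empties a box. Your proposal needs a strategy-plus-monovariant argument of this kind for the ``$\ge$'' direction.
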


We also show that \(f(\mathcal{L})\) is cubic, provided that the line arrangement is in general position.

\begin{theorem}\label{thm:cubic}
Let $\mathcal{L}$ be a line arrangement with $n$ lines in general position. Then $f(\mathcal{L})\in\Theta(n^3)$.
\end{theorem}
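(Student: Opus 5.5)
By \cref{thm:formula} we have $f(\mathcal{L})=R+\sum_{\ell\in\mathcal{L}} c_\ell$. So the plan is to show $R=\Theta(n^2)$ and $\sum_\ell c_\ell=\Theta(n^3)$. For lines in general position, $R=\binom{n+1}{2}+1=\Theta(n^2)$, which is negligible. The upper bound is also immediate: $c_\ell\le R/2$ for every line, so $\sum_\ell c_\ell\le nR/2=O(n^3)$. The whole task is therefore the lower bound, namely that a constant fraction of the lines have rank $\Omega(n^2)$.

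\textbf{Counting regions on one side of a line.} Fix $\ell$ and a side $H$ of $\ell$. The regions of $\mathcal{L}$ in the open half-plane $H$ are the regions of the arrangement of the other $n-1$ lines restricted to $H$. Sweeping, or using Euler's formula on the half-plane, their number equals $1+(\text{number of other lines})+(\text{number of vertices of }\mathcal{L}\setminus\{\ell\}\text{ inside }H)$, up to an additive $O(n)$ correction. In general position no line is parallel to $\ell$, so each other line meets $H$ in a ray, and the formula is exact: $1+(n-1)+V_H$. Hence
\[
c_\ell \ge V_\ell := \min\bigl(V_{H}, V_{H'}\bigr),
\]
where $V_H$ and $V_{H'}$ count the intersection points of pairs of lines other than $\ell$ lying on either side of $\ell$. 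It therefore suffices to show that for $\Omega(n)$ lines $\ell$, both sides of $\ell$ contain $\Omega(n^2)$ vertices.

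\textbf{Main obstacle: many balanced lines.} This is the step I expect to be hardest. I would phrase it as a point–line incidence or halving statement. Let $P$ be the set of $\binom n2$ vertices. We want many lines of $\mathcal{L}$ such that each of their sides contains at least a constant fraction of $P$.

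My first attempt is a centerpoint-style argument. Take a centerpoint $q$ of $P$, so that every closed half-plane containing $q$ contains at least $|P|/3$ points of $P$. Any line $\ell$ then has $\ge |P|/3$ points on the side containing $q$, minus the $n-1$ vertices lying on $\ell$ itself. So we only need many lines having $\Omega(n^2)$ vertices on the side away from $q$.

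Now consider the lines sorted by slope. Any two lines $\ell_i,\ell_j$ cross at a vertex, and that vertex lies on the far side (from $q$) of every line separating it from $q$. I expect this to be the delicate part. An alternative is the direct duality/zone argument. For the lines ranked in the middle third by slope, choose the direction perpendicular to $\ell$; then the $\Omega(n)$ lines with smaller slope and the $\Omega(n)$ lines with larger slope pairwise intersect. Many of these intersections fall on each side of $\ell$, giving $\Omega(n^2)$ on each side. If that fails, I would argue by averaging: $\sum_\ell V_\ell$ counts pairs (vertex, line) with the vertex on the minority side. By a $k$-set / depth bound, a random vertex has depth $\Omega(n)$ in $\mathcal{L}$ for a constant fraction of vertices. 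This gives $\sum_\ell V_\ell \ge \Omega(n^2)\cdot\Omega(n)$, provided we can show that the minority side of each line does not change the count by more than a constant factor. That last point needs care, and I would try to handle it with the centerpoint.
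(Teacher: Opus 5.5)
Your reduction is fine: by \cref{thm:formula} the upper bound is just $c_\ell\le R/2$, and your exact count of $n+V_H$ regions on a side $H$ of $\ell$ is correct. But there is a genuine gap: the lower bound $\sum_{\ell} c_\ell=\Omega(n^3)$, which is the whole content of the theorem, is never proved. You flag it yourself as the main obstacle, and none of your three sketches closes it.

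The slope argument is false. Perturb a pencil of $n-1$ lines through a point $O$ into general position, so that all $\binom{n-1}{2}$ crossings lie in a tiny disk around $O$. Add a line $\ell$ of median slope that misses the disk. Then one side of $\ell$ contains no vertex at all and $c_\ell=n$, so slope ranks tell you nothing about sides. The centerpoint sketch correctly secures the near side, but it stops exactly where a quantitative input is needed, namely many far-side vertices. The averaging sketch relies on an unspecified depth bound. Depth controls how many lines a ray from $v$ must cross, not on which side of those lines $v$ falls relative to a fixed reference. It also leaves open the passage to ``minority side'', which is the crux.

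The missing ingredient is the $\le k$-level bound \cref{thm:k-level}: from any fixed region, at most $rn$ regions lie at dual-graph distance at most $r$. Here $d(F,F')$ equals the number of lines separating $F$ and $F'$. With this bound your route closes, provided you work with regions rather than vertices and average over all reference regions instead of fixing a centerpoint:
\begin{itemize}
\item A line with $a$ and $R-a$ regions on its two sides separates $2a(R-a)=2c_\ell(R-c_\ell)\le 2Rc_\ell$ ordered pairs of regions. Hence $\sum_{F,F'}d(F,F')\le 2R\sum_{\ell}c_\ell$.
\item For each fixed $F$, \cref{thm:k-level} with $r=\lfloor n/3\rfloor$ leaves at least $R-rn\ge n^2/6$ regions at distance greater than $n/3$. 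So $\sum_{F'}d(F,F')\ge n^3/18$.
\item Summing over $F$ gives $\sum_\ell c_\ell\ge n^3/36$.
\end{itemize}

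This would be a purely combinatorial proof through \cref{thm:formula}. The paper instead argues directly in the game. If some box has fewer than $n/10$ pebbles, the small-pair \cref{lem:local_bob1} forces every box at distance at least $n/3$ from it to hold at least $7n/30$ pebbles, and \cref{thm:k-level} supplies at least $n^2/6$ such boxes. Either way, \cref{thm:k-level}, or an equivalent $\le k$-level estimate, is the idea your proposal is missing.
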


Our results are organized as follows.
In \cref{sec:autopilot} we introduce the notion of autopilot strategies for Alice, and we prove that one of those autopilot strategies guarantees a win for Alice, provided that she is allowed to initially distribute at least $R+ \sum_{\ell\in\mathcal{L}} c_\ell$ pebbles. This establishes the ``$\le$'' inequality in~\cref{thm:formula}.

In \cref{sec:bob} we prove that with fewer than $R+ \sum_{\ell\in\mathcal{L}} c_\ell$ pebbles, Bob can force a win in a finite number of moves. In high level terms, Bob's strategy proceeds in stages, where in each stage either the total number of pebbles decreases, or the pebbles are shuffled from ``more important'' boxes to the ``less important'' boxes. Using a suitable monovariant, we show that eventually one box becomes empty. This establishes the ``$\ge$'' inequality in~\cref{thm:formula}.%, thus completing its proof.

Finally, in~\cref{sec:cubic} we prove that as long as the arrangement of $n$ lines is in general position, the quantity $\sum_{\ell\in\mathcal{L}} c_\ell$ is cubic in $n$. The argument uses the $\le k$-level Theorem~\cite{alon1986number}.

\subsection*{Related work}
The game considered in this work is a special case of a so-called \textit{balancing game} introduced in full generality by Spencer~\cite{Spencer1977balancing}.
There, the game is determined by a convex body $K\in \mathbb{R}^d$, a set $V\subseteq \mathbb{R}^d$ of $d$-dimensional vectors, and a starting position $\xi\in K$. In the $i$-th round, one player (the ``pusher'') picks a vector $v\in V$ and the other player (the ``chooser'') moves from the current position $\xi_{i-1}$ to either $\xi_i=\xi_{i-1}+v$ or to $\xi_i=\xi_{i-1}-v$. The pusher wins if the current position ever leaves $K$, and the chooser wins otherwise.

To see why the game considered here is a special case of a balancing game, fix any order over the $d\le \binom{n+1}{2}+1$ boxes, and represent the current configuration $(x_1,\dots,x_d)$ of pebbles as a point $(x_1,\dots,x_d)\in\mathbb{R}^d$.
Each line $\ell_i\in\mathcal{L}$ gives rise to a single vector $v_i\in\{- 1,+1\}^d$, and the convex region $K=K^+$ is the $d$-dimensional ``quadrant'' with all coordinates positive.

The balancing game was studied in several other works, see e.g.\ \cite{barany1979class} and references in \cite{fraenkel2012combinatorial}.
To our knowledge, the question of finding the smallest $\sum_i x_i$ for which the chooser wins with $K=K^+$ has not been studied before.

\section{Autopilot strategy for Alice}\label{sec:autopilot}
In this section we describe the \emph{autopilot strategy} for Alice that, given a line arrangement $\mathcal{L}$, allows her to initially place pebbles, and then play so as to prevent Bob from ever emptying any box. 
The main idea is that for every line in the arrangement, Alice first assigns a side where she will remove the pebbles when the line is first played. After that for every line she alternates the side from which she removes the pebbles. At the beginning, she assigns just enough pebbles to each box to guarantee that if every line is played exactly once, regardless of the order in which they are played no box becomes empty. 
We show that when Alice follows this strategy and updates the sides of the lines accordingly, the distribution of the pebbles will always maintain this property and thus Bob cannot win.
The upper bound on \(f(\mathcal{L})\) then follows from taking the assignment of the sides that leads to the smallest number of pebbles.

Formally, let \(\mathcal{L}=\{\ell_1,\dots, \ell_n\}\) be a line arrangement and let \(B\) be the set of boxes defined by~\(\mathcal{L}\). We call a function \(p: B \rightarrow  \mathbb{N}\) that assigns pebbles to the boxes a \emph{pebble distribution}. Consider the two half-planes defined by \(\ell\) and let \(\sigma(\ell)\) be a function that assigns an orientation "$+$" to one of the half planes, and the orientation "$-$" to the other. Let \(\tau_\sigma\colon B \rightarrow 2^\mathcal{L}\) be the function that assigns to each box \(b\) the set of all lines for which \(b\) lies in the side labeled  "$-$".

Initially, Alice chooses a certain orientation \(\sigma\) and for each box \(b\) she sets \begin{equation}
    p(b) = |\tau_\sigma(b)| + 1. \label{inv:autopilot}
\end{equation}
We call  pebble distributions defined with respect to an orientation of the lines \emph{autopilot distributions}.
During the game, Alice will update the orientation \(\sigma\), such that (\ref{inv:autopilot}) always holds.
Let \(\sigma_t\) be the orientation after \(t\) rounds and let \(p_t\) be the corresponding autopilot distribution.
Let \(\ell\) be the line picked by Bob in round \(t\). Then Alice removes one pebble from each box in the side of \(\ell\) labeled "$-$" and adds one pebble to each box on the other side. She then gets \(\sigma_{t+1}\) by flipping  the labels of the two sides of \(\ell\).
We call this the \emph{autopilot strategy}.

\begin{lemma}\label{lem:autopilot_inv}
If Alice starts with any autopilot distribution and plays the autopilot strategy, all distributions \(p_0,\dots \) of the pebbles to the boxes will be autopilot distributions. 

\end{lemma}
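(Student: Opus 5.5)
The proof goes by induction on the round number $t$. We show that after each round $t$, the current distribution $p_t$ equals the autopilot distribution associated with the current orientation $\sigma_t$. That is, $p_t(b)=|\tau_{\sigma_t}(b)|+1$ for every box $b\in B$.

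The base case $t=0$ holds by construction, since Alice starts with an autopilot distribution with respect to $\sigma_0$.

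For the inductive step, assume $p_t(b)=|\tau_{\sigma_t}(b)|+1$ for all $b$, and let $\ell$ be the line Bob picks in round $t$. The orientation $\sigma_{t+1}$ agrees with $\sigma_t$ on every line other than $\ell$ and swaps the two labels of $\ell$. So for each box $b$, $\tau_{\sigma_{t+1}}(b)$ and $\tau_{\sigma_t}(b)$ differ only in whether they contain $\ell$. We split into two cases according to the side of $\ell$ containing $b$.

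\emph{Case 1: $b$ lies on the side of $\ell$ labeled ``$-$'' under $\sigma_t$.} Then $\ell\in\tau_{\sigma_t}(b)$. After flipping, that side is labeled ``$+$'', so $\tau_{\sigma_{t+1}}(b)=\tau_{\sigma_t}(b)\setminus\{\ell\}$. Alice removes one pebble from $b$, so
\[
p_{t+1}(b)=p_t(b)-1=|\tau_{\sigma_t}(b)|=|\tau_{\sigma_{t+1}}(b)|+1 .
\]

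\emph{Case 2: $b$ lies on the ``$+$'' side under $\sigma_t$.} Then $\ell\notin\tau_{\sigma_t}(b)$, and after flipping $\tau_{\sigma_{t+1}}(b)=\tau_{\sigma_t}(b)\cup\{\ell\}$. Alice adds one pebble to $b$, so
\[
p_{t+1}(b)=p_t(b)+1=|\tau_{\sigma_{t+1}}(b)|+1 .
\]

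Since every box lies strictly on one side of $\ell$ (boxes are open cells of the arrangement), these two cases are exhaustive. This closes the induction.

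No step is a real obstacle here. The only point needing care is that the move is well defined: Alice removes a pebble only from boxes $b$ with $\ell\in\tau_{\sigma_t}(b)$. By the invariant, these have $p_t(b)\ge 2$, so every box still has at least one pebble afterwards. This is the fact the paper will use to conclude that Bob never wins.
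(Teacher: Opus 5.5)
Your proof is correct and follows essentially the same route as the paper: induction on the round, using that $\sigma_{t+1}$ differs from $\sigma_t$ only on $\ell$, so boxes with $\ell\in\tau_{\sigma_t}(b)$ lose exactly one pebble and the others gain exactly one. Your explicit verification of both cases and your remark that $p_t(b)\ge 2$ whenever a pebble is removed are slightly more detailed than the paper's version, but the argument is the same.
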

\begin{proof}%
We show this by induction over the number of rounds. 
The base case is clear by the statement. 
Now consider the changes in round \(t\). The orientation of exactly one line \(\ell\) differs between \(\sigma_t\) and \(\sigma_{t+1}\). Thus we have to argue, that \(p_{t+1}(b)\) changes according to (\ref{inv:autopilot}).
Let \(b\) be a box such that \(\ell \in \tau_{\sigma_t} (b)\). 
Then \(\ell \notin \tau_{\sigma_{t+1}}(b)\) and similarly, if \(\ell\notin \tau_{\sigma_t} (b)\), then  \(\ell\in \tau_{\sigma_{t+1}} (b)\).
It directly follows that removing a pebble from all boxes of the first kind and adding a pebble in each box of the second kind maintains the invariant.
\end{proof}

Since every autopilot distribution has at least one pebble in each box, we conclude: %have the following Corollary.
\begin{corollary}\label{cor:autopilot_win}
When Alice starts with an autopilot distribution and plays the autopilot strategy, Bob has no winning strategy.
\end{corollary}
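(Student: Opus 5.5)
The corollary follows directly from \cref{lem:autopilot_inv} together with the definition of an autopilot distribution.

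Suppose Alice starts with an autopilot distribution $p_0$ with respect to some orientation $\sigma_0$, and then plays the autopilot strategy. Bob may pick any (possibly infinite) sequence of lines. By \cref{lem:autopilot_inv}, every distribution $p_t$ arising after $t$ rounds is again an autopilot distribution, namely with respect to the orientation $\sigma_t$ that Alice maintains. Hence for every box $b$ and every $t$ we have $p_t(b)=|\tau_{\sigma_t}(b)|+1\ge 1$.

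This bound also settles the legality of Alice's moves. The autopilot strategy always prescribes removing a pebble from boxes that already contain at least one, so the move is well defined. Moreover, the resulting configuration is $p_{t+1}$, which is again at least $1$ everywhere. So no box is ever empty, at any time and against any sequence of choices by Bob.

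Since Bob wins exactly when some box becomes empty, no strategy of Bob can ever produce a win against this play by Alice. Therefore Bob has no winning strategy. There is no real obstacle here: the substance is already contained in the induction of \cref{lem:autopilot_inv}, and this step only observes that the invariant forces every box to hold at least one pebble.
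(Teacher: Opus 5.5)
Your proposal is correct and matches the paper's argument: by \cref{lem:autopilot_inv} every distribution reached is an autopilot distribution, and $p_t(b)=|\tau_{\sigma_t}(b)|+1\ge 1$ means no box is ever empty. The paper states exactly this in a single sentence; your remark about the legality of Alice's moves is a harmless extra check.
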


As a direct consequence, we obtain the ``$\le$'' inequality in \cref{thm:formula}.
\begin{lemma}\label{lem:formula-UB}
Let $\mathcal{L}$ be any line arrangement. Then $f(\mathcal{L})\le R + \sum_{\ell\in\mathcal{L}} c_\ell$.
\end{lemma}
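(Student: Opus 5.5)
By \cref{cor:autopilot_win}, Alice prevents Bob from ever winning whenever she starts from any autopilot distribution. So it suffices to exhibit one orientation $\sigma$ whose autopilot distribution uses exactly $R+\sum_{\ell\in\mathcal{L}} c_\ell$ pebbles. The natural choice is the orientation that, for each line $\ell$, labels the side containing fewer regions by ``$-$''. If both sides contain exactly half of the regions, either side may be labeled ``$-$''.

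The total number of pebbles in the autopilot distribution for $\sigma$ is
\[
\sum_{b\in B} p(b) \;=\; \sum_{b\in B}\bigl(|\tau_\sigma(b)|+1\bigr) \;=\; R + \sum_{b\in B}|\tau_\sigma(b)|.
\]
We evaluate the last sum by double counting the pairs $(b,\ell)$ with $\ell\in\tau_\sigma(b)$, i.e., with box $b$ lying on the ``$-$'' side of $\ell$. Grouping these pairs by line instead of by box, the sum equals $\sum_{\ell\in\mathcal{L}}$ of the number of regions on the ``$-$'' side of $\ell$. By our choice of $\sigma$, this number is precisely the number of regions on the side of $\ell$ containing at most half of all regions, which is $c_\ell$. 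Hence this autopilot distribution uses exactly $R+\sum_{\ell\in\mathcal{L}} c_\ell$ pebbles, and Alice never loses. Therefore $f(\mathcal{L})\le R+\sum_{\ell\in\mathcal{L}} c_\ell$.

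The only step needing care is the definition of $c_\ell$. Every region lies strictly on one side of $\ell$, so the two sides partition the $R$ regions. Consequently the smaller side, or either side in the case of a tie, has at most $R/2$ regions, and $c_\ell$ is well defined. No other obstacle arises. (Note also that picking the smaller side for each line independently minimizes the total over all orientations, since the total decomposes line by line; this optimality is not needed for the upper bound.)
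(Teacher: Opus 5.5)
Your proof is correct and follows the paper's argument. You take the autopilot distribution with ``$-$'' on the side of each line containing at most half the regions, show that it uses exactly $R+\sum_{\ell\in\mathcal{L}} c_\ell$ pebbles, and invoke \cref{cor:autopilot_win}; your double count over pairs $(b,\ell)$ simply spells out the pebble-count step that the paper states without detail.
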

\begin{proof} For the initial distribution of pebbles, Alice chooses the autopilot distribution \(p_a\) that, for each line $\ell\in\mathcal{L}$, assigns the orientation ``$-$'' to the half-plane that has at most half of the boxes. In this way, Alice has distributed precisely $\sum_{b\in B} p_a(b) = R + \sum_{\ell\in\mathcal{L}} c_\ell$ pebbles. If she then follows the autopilot strategy, by \cref{cor:autopilot_win} no box will ever become empty.
\end{proof}

\section{Monovariant strategy for Bob}\label{sec:bob}
In this section we give a strategy for Bob that allows him to win, if Alices places fewer than \(R+\sum_{\ell\in\mathcal{L}} c_\ell\) pebbles.  
The game is subdivided into stages and we can show that after every stage either Bob wins, or the overall number of pebbles decreased, or the pebble distribution is smaller in some fixed linear order of the pebble distributions. 
As the latter two cases can only happen finitely many times, Bob will eventually win.

\begin{lemma}\label{lem:formula-LB} Let $\mathcal{L}$ be any line arrangement. Then $f(\mathcal{L})\ge R + \sum_{\ell\in\mathcal{L}} c_\ell$.
\end{lemma}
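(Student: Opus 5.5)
The plan is to prove the contrapositive. If Alice distributes fewer than $R+\sum_{\ell}c_\ell$ pebbles, Bob forces an empty box in finitely many rounds.

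The structural target is a characterization: Alice survives from a distribution $p$ only if $p$ dominates some autopilot distribution, that is, there is an orientation $\sigma$ with $p(b)\ge|\tau_\sigma(b)|+1$ for every box $b$. Summing this over all boxes gives $\sum_b p(b)\ge R+\sum_{\ell}|\{b: \ell\in\tau_\sigma(b)\}|$. For each line, the number of boxes on its ``$-$'' side is at least $c_\ell$, so this is at least $R+\sum_\ell c_\ell$. This is exactly the minimizing choice used in the proof of \cref{lem:formula-UB}. It therefore suffices to show that Bob wins from any $p$ that dominates no autopilot distribution.

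For Bob's strategy I would use the stage-and-monovariant scheme announced above. Give the boxes a fixed linear order by ``importance''; boxes lying on the small side of many lines are more important, because autopilot distributions must put more pebbles there. Compare distributions by the pair (total number of pebbles, distribution read lexicographically in this order). A stage starts from the current distribution $p$, which by the invariant still dominates no autopilot distribution. Bob picks a line $\ell$ witnessing a deficiency and plays it. If Alice removes pebbles from the larger side of $\ell$, the total drops by at least $|S|-|S^c|\ge 0$; in the balanced case I would argue that the lexicographic order drops instead. If she removes from the smaller side, Bob plays $\ell$ again. Alice must then either repeat the same side, which drains the small side further toward an empty box, or undo her move. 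Undoing is harmless to Bob only if he interleaves other lines in between; so a stage should really be ``play $\ell$, then a carefully chosen sequence of other lines, then $\ell$ again'', arranged so that the net effect either lowers the total or moves pebbles from more important to less important boxes.

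The key invariant to verify is that Alice's answers during a stage cannot create domination of an autopilot distribution. Here I would use \cref{lem:autopilot_inv} in reverse. A single move on $\ell$ maps autopilot distributions bijectively to autopilot distributions, by flipping the orientation of $\ell$. So $p$ dominates some autopilot distribution if and only if the moved distribution does, and the property ``dominates none'' is preserved by every move, whatever Alice chooses.

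Since pebble counts are nonnegative integers, the monovariant can decrease only finitely often, so Bob must eventually empty a box. The main obstacle is designing a stage that provably decreases the monovariant against every reply of Alice, especially when she alternates sides. My first attempt would be to choose, at each stage, a box $b$ that is minimal with respect to the violated domination, together with the lines on whose small side $b$ lies. Bob plays those lines in a fixed order, and I would check by an exchange argument that Alice's best replies either cost pebbles in total or push pebbles out of $b$ toward less important boxes.
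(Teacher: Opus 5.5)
\textbf{There is a genuine gap: the invariant your plan rests on is false, and the stage design you leave open is the entire proof.}

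\textbf{The invariant fails.} Alice's reply on $\ell$ is a fixed translation of $p$: it subtracts one on the side $S$ she picks and adds one on the other side. Flipping $\ell$ in an autopilot distribution $q_\sigma$ is this same translation only when $\sigma$ labels $S$ with ``$-$''. For the other half of the autopilot distributions the flip is the opposite translation. So domination is transported back through only half of them, and a single reply can create domination.

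Concretely, take four boxes in a row (three parallel lines), where $R+\sum_\ell c_\ell=8$, and let $p=(3,1,1,2)$, which has $7$ pebbles. Suppose Bob plays $\ell_1$ and Alice removes from $b_1$. She reaches $(2,2,2,3)$. This dominates the autopilot distribution $(2,1,2,3)$, which has ``$-$'' on $\{b_1\}$ for $\ell_1$ and on the right side for $\ell_2,\ell_3$. From there she survives forever.

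So Bob's choice of line is everything. No argument of the form ``whatever Alice answers, the deficiency persists'' can work. Your stronger target (Bob wins whenever $p$ dominates no autopilot distribution) is also more than the lemma needs and is not established by your argument. The paper only maintains that the total stays below $R+\sum_\ell c_\ell$ at the start of each stage.

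\textbf{What is missing: the stage design.} You flag this as the main obstacle but only sketch it; here is how the paper does it.
\begin{itemize}
\item It fixes one autopilot $p_a$, with ``$-$'' on each smaller side; for balanced lines, the smaller side is the one avoiding $b_1$.
\item It picks a focal box $\mathbf b$ with $p(\mathbf b)\le A:=p_a(\mathbf b)-1$. Here $A$ is the number of lines having $\mathbf b$ on their smaller side.
\item It orders these $A$ lines by decreasing rank. Ties are broken by decreasing characteristic vector of the smaller side, in the same box order used for the lexicographic comparison of distributions.
\item Bob plays $\ell_1$. He moves to $\ell_{i+1}$ after Alice decreases $\ell_i$, and back to $\ell_{i-1}$ after she increases $\ell_i$.
\end{itemize}
The play then splits into a stack of decreased lines $\ell_1,\dots,\ell_k$ plus matched pairs (decrease $\ell_i$, increase $\ell_{i+1}$). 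Each pair leaves $\mathbf b$ unchanged and changes the total by $2(c_{\ell_{i+1}}-c_{\ell_i})\le 0$. When the two ranks agree, the pair is a strict lexicographic decrease. This gives three outcomes:
\begin{itemize}
\item If Alice decreases $\ell_A$, the box $\mathbf b$ is empty.
\item If she increases $\ell_1$ with an empty stack, the pair (total, lex) drops.
\item An infinite stage is impossible: there are only finitely many rank-dropping pairs, and after them only finitely many equal-rank pairs.
\end{itemize}
Your ``play $\ell$, then other lines, then $\ell$ again'' scheme has no analogue of this pairing, and the pairing is exactly what defeats Alice's alternation. Note also that your importance order on boxes must be the same order used to sort equal-rank lines. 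Otherwise equal-rank pairs need not decrease lexicographically.
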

\begin{proof}
Fix $\mathcal{L}$. 
We need to prove that with fewer than $X= R +\sum_{\ell\in\mathcal{L}} c_\ell$ pebbles in total, Bob can force a win.
Label the boxes by $B=\{b_1,b_2,\dots,b_R\}$ in any order and let \(p\) be the current pebble distribution. Furthermore, let \(P=\sum_{b\in B}p(b) <X\) be the total number of pebbles.

For each line $\ell\in\mathcal{L}$ we define its \textit{smaller} side as the side that either contains fewer than half of the boxes, or it contains precisely half of the boxes and does \textit{not} contain the box~$b_1$. 
Consider the autopilot distribution obtained by assigning the orientation ``$-$'' to the smaller side of each line, and denote by $p_a(b_i)$ the number of pebbles in box $b_i$ in this autopilot distribution. Note that $\sum_{b_i\in B} p_a(b_i)=X>P$.

We also define a total order on the set of all pebble distributions, by considering the lexicographic order on the lists \([p(b_1), p(b_2),\dots, p(b_R)]\). We write \(p \prec_L p'\) if \(p\) is before \(p'\) in the lexicographic ordering.

Bob's strategy proceeds in stages. Let \(p\) be the pebble distribution before a stage and \(p'\) be the pebble distribution at the end of the stage. We show that after each stage one of the following holds:
\begin{enumerate}
    \item Bob wins, that is, there is a box \(b\) with \(p'(b)=0\)
    \item The total number of pebbles decreases or \(p'\prec_L p\).
    \end{enumerate}
    Since the second situation can only happen finitely many times, Bob will win.

At the beginning of each stage, we have $P<X$.
Thus, by the pigeonhole principle, there exists a box $\textbf{b}$ that has strictly fewer than $p_a(\textbf{b})$ pebbles (that is, $p(\textbf{b})\le p_a(\textbf{b})-1$).
Call this box the \textit{focal box}.
Next, Bob will consider all the $A=p_a(\textbf{b})-1$ lines $\ell$ that have the focal box $\textbf{b}$ on their smaller side and he will sort those lines as follows. First he sorts them into \textit{batches} by decreasing rank $c_\ell$.  Within each batch he sorts lexicographically decreasing by the characteristic vector of the boxes on the smaller side of \(\ell\). That is, by \(v^\ell=(v^\ell_1,\dots, v^\ell_R) \in \{0,1\}^R\) with \(v^\ell_i=1\) if and only if \(b_i\) is on the smaller side of \(\ell\). 
See \cref{fig:batch_sorting} for an illustration.

\begin{figure}
    \centering
    \includegraphics{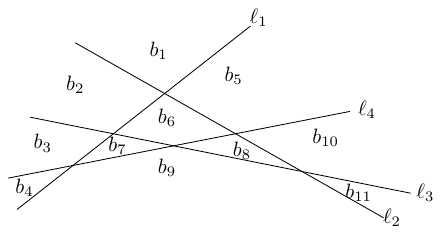}
    \caption{The indices of the lines show how all the lines would be ordered, if \(\textbf{b}\) was on the smaller side of all of them (each line has 4 or 5 boxes on their smaller side). \(\ell_1\) is before \(\ell_2\) as the first position of \(v^{\ell_1}_1 = v^{\ell_2}_1\) agree, but \(v^{\ell_1}_2 = 1\) and \(v^{\ell_2}_2=0\).
    If \(\textbf{b}=b_9\) Bob only plays \(\ell_3\) and \(\ell_4\)}
    \label{fig:batch_sorting}
\end{figure}

Denote the lines in this sequence by $S=(\ell_1,\dots,\ell_A)$. 
We will now focus on the smaller side of each line and say that Alice \emph{decreases} a line if she removes the pebbles from that side and otherwise she \emph{increases} the line.

Bob's strategy within a stage is as follows:
Bob starts by playing $\ell_1$.
In general, suppose that Bob has just played $\ell_i$. If Alice decreases \(\ell_i\), then Bob will next play $\ell_{i+1}$. If Alice increases, then Bob will next play $\ell_{i-1}$.
When Alice decreases \(\ell_A\), increases \(\ell_1\) or a box becomes empty, the stage ends.

Another way to describe Bob's strategy is by considering a stack. Whenever Alice decreases a line \(\ell_i\), it is pushed to the stack and Bob next plays \(\ell_{i+1}\). If Alice increases a line \(\ell_j\) and the stack is not empty, \(\ell_{j-1}\) is the topmost line on the stack. Then \(\ell_{j-1}\) is popped from the stack and paired with \(\ell_j\). Bob next plays line \(\ell_{j-1}\).
If the stack is empty on an increase, \(\ell_A\) is pushed to the stack or a box becomes empty, the stage ends.

This decomposes the sequence of lines that Bob plays into the set of pairs and the content of the stack. The content of the stack is a, possibly empty, sequence \(\ell_1,\dots, \ell_k\).
We say that a pair \((\ell_i, \ell_{i+1})\) is:
\begin{enumerate}
\item \textit{red} if the rank of $\ell_i$ is larger than the rank of $\ell_{i+1}$, that is, they are in different batches.

\item \textit{green} if the ranks of $\ell_i$ and $\ell_{i+1}$ are the same, that is, they are in the same batch.

\end{enumerate}

See~\cref{fig:convex-c} for an illustration. 

\begin{claim}\label{claim:pairs}
The pairs have the following effects on the pebble distribution.
\begin{enumerate}[label=(\roman*)]
    \item No pair changes the number of pebbles in \(\textbf{b}\).   \label{item:bfunchanged}
    \item Each red pair decreases the overall number of pebbles by at least 2. \label{item:redDecrease}
    \item Each green pair leaves the overall number of pebbles unchanged. \label{item:greenUnchanged}
    \item If \(p\) and \(p'\) are the pebble distributions before and after a green pair respectively, then \(p' \prec_L p\). \label{item:greenLexDecrease}
\end{enumerate}
\end{claim}
\begin{claimproof}
Let \((\ell_i, \ell_{i+1})\) be a pair. Recall that \(\ell_i\) was decreased and \(\ell_{i+1}\) was increased.
    \begin{enumerate}[label=(\roman*)]
        \item For any pair, decreasing \(\ell_i\) removes a pebble from \(\mathbf{b}\) and increasing \(\ell_{i+1}\) adds a pebble to \(\mathbf{b}\), leaving \(p(\mathbf{b})\) unchanged.
        \item
        Decreasing \(\ell_i\) changes the overall number of pebbles by \((R - c_{\ell_i}) - c_{\ell_i}\). Increasing \(\ell_{i+1}\) changes the overall number of pebbles by \(c_{\ell_{i+1}} - (R-c_{\ell_{i+1}})\). Thus the overall change is 
        \begin{equation*}
           R-2c_{\ell_i} + 2c_{\ell_{i+1}} - R = 2 (c_{\ell_{i+1}} - c_{\ell_{i}})
        \end{equation*}
        As the pair is red by assumption, we have \(c_{\ell_i}> c_{\ell_{i+1}}\) and the statement follows.
        \item This follows from the same argument as (ii), using that \(c_{\ell_i} = c_{\ell_{i+1}}\).
        \item Consider the pebble distributions \(p\) and \(p'\) before and after the green pair respectively  and let $B_i$, $B_{i+1}$ be the sets of boxes on the smaller sides determined by $\ell_i$, $\ell_{i+1}$, respectively. 
The net effect of decreasing $\ell_i$ and increasing $\ell_{i+1}$ is that each box $b\in B_i\setminus B_{i+1}$ loses two pebbles, and each box $b'\in B_{i+1}\setminus B_i$ gains two pebbles. Since the lines $\ell_i$, $\ell_{i+1}$ have the same rank (and are different), we have $|B_i\setminus B_{i+1}|=|B_{i+1}\setminus B_i|>0$, and thanks to the ordering by characteristic vectors we get \(p'\prec_L p\). \claimqedhere
    \end{enumerate}
\end{claimproof}
 \begin{figure}[h!]
    \centering
    \includegraphics[width=0.8\linewidth]{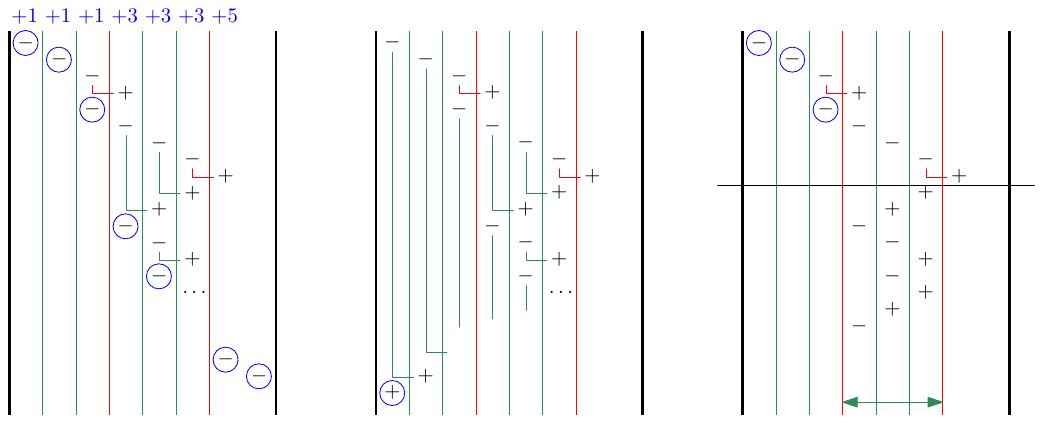}
    \caption{Illustrations for the three cases which correspond to (1) Alice decreasing $\ell_A$ (left); (2) Alice increasing $\ell_1$ (middle); and (3) Alice never decreasing $\ell_A$ and never increasing $\ell_1$ (right).}
    \label{fig:convex-c}
\end{figure}

Next, we argue that Bob wins in a finite number of steps. 
We distinguish 3 cases, depending on what happens in a stage.

    \subparagraph*{(1) Alice decreases \(\ell_A\)} %We claim that then the focal box \(\textbf{b}\) is empty. 
    Let \(p\) be the initial pebble distribution of the stage.
    Each decrease of a line that is still on the stack removes one pebble from \(\textbf{b}\). Furthermore, by \cref{claim:pairs} \ref{item:bfunchanged} no pair changes the number of pebbles in \(\textbf{b}\).
    At the beginning of the stage there were \(p(\textbf{b})\leq p_a(\textbf{b})-1=A\) pebbles in \(\textbf{b}\) and each decrease on the \(A\) lines on the stack removed a pebble. Thus, \(\textbf{b}\) is now empty and Bob wins.
    \subparagraph*{(2) Alice increases \(\ell_1\)} Let \(p\) be the initial pebble distribution of this stage and \(p’\) be the pebble distribution after the stage. We show that in this case, the total number of pebbles decreased or \(p'\prec_L p\).
    By \cref{claim:pairs} \ref{item:redDecrease} and \ref{item:greenUnchanged}, red pairs decrease the total number of pebbles, green pairs do not change it. After Alice increases \(\ell_1\), the stack is empty, and thus the number of pebbles did not increase in this stage. 
    Thus, if there is a red pair or  $\ell_1$ has rank strictly less than $R/2$, then increasing \(\ell_1\) decreases the total number of pebbles.

In the other case, all pairs are green \textit{and} $\ell_1$ has order exactly $R/2$. In that case, the effect of increasing $\ell_1$ and by \cref{claim:pairs} \ref{item:greenLexDecrease} the effect of any green pair is to decrease the pebble distribution with regard to \(\prec_L\).
In particular, \(b_1\) is on the larger side of \(\ell_1\) by definition and thus increasing \(\ell_1\) decreases the number of pebbles in the most significant position. If follows that \(p'\prec_L p\) as claimed.

As the total number of pebbles is still smaller than \(X\), Bob can ignore all previous moves, pick a new focal box and start a new stage.
Since the number of pebbles and the lexicographical ordering of the pebble distribution can only decrease a finite number of times, there are only finitely many stages that end with Alice increasing \(\ell_1\).

\subparagraph*{(3) Alice never decreases \(\ell_A\) or increases \(\ell_1\)} We show that in this case, the stage ends with reaching some empty box.
By \cref{claim:pairs} \ref{item:bfunchanged}, the pairs do not increase the total number of pebbles.
The lines on the stack increase the initial number of pebbles \(P\)
by at most  $Z=n\cdot R\le n\cdot \left(\binom{n}{2}+1\right)$. Since by \cref{claim:pairs} \ref{item:redDecrease} each red pair removes at least 2 pebbles in total, the sequence may contain at most $\frac12(P+Z)$ red pairs.

Consider the moment of the last occurrence of a red pair (or the very beginning, if no red pair ever occurs). 
In the subsequence following this moment, all pairs are green and all the lines involved in them have the same rank.

Let $(\ell_i,\ell_{i+1})$ be any green pair and  \(p, p'\) be the pebble distributions before and after the green pair. Then by \cref{claim:pairs} \ref{item:greenLexDecrease} after each green pair we have \(p' \prec_L p\).
Since the total number of pebbles is finite, after finitely many moves some box becomes empty and Bob wins. \qedhere
\end{proof}

\begin{proof}[Proof of \cref{thm:formula}]
    Immediate from \cref{lem:formula-LB,lem:formula-UB}.
\end{proof}

\section{General position}
\label{sec:cubic}
In this section we prove \cref{thm:cubic}.
The upper bound turns out to be simple (any autopilot will do). For the lower bound, we introduce several auxiliary notions, we prove a helpful lemma, and we use a known result about $\le k$-levels.

\begin{definition}
Let $\mathcal L$ be a line arrangement. Its \textit{dual graph} is the unweighted graph $D(\mathcal L)=(B,E)$ that has one vertex for each box, 
and edges between boxes that share a side.

The distance of two nodes $u$, $v$ in $D(\mathcal L)$ is denoted $d(u,v)$.
Moreover, given a pebble distribution $p\colon B\to \mathbb{N}$, we say that a pair $u,v\in B$ is \textit{small} if \(p(u)+p(v) < d(u,v) + 2\).
\end{definition}

\begin{lemma}\label{lem:local_bob1}
    If a pebble distribution $p$ contains a small pair, then Bob has a winning strategy.
\end{lemma}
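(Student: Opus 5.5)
Since $p(u)+p(v)$ is the quantity of interest, we induct on $d=d(u,v)$. The base case is $d=0$, i.e.\ $u=v$: the condition $2p(u)<2$ means $p(u)=0$, so the box is already empty and Bob has won. (If the definition is meant only for distinct boxes, start instead at $d=1$. Then $u,v$ share a side lying on some line $\ell$, and $p(u)+p(v)<3$ forces $p(u)+p(v)\le 2$. Bob plays $\ell$: one of the two boxes loses a pebble and the other gains one. If some box is already empty we are done; otherwise $p(u)=p(v)=1$, and either choice of Alice empties one of them.)

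For the induction step, assume $d=d(u,v)\ge 1$ and $p(u)+p(v)\le d+1$. Fix a shortest path $u=w_0,w_1,\dots,w_d=v$ in $D(\mathcal L)$. The first edge $w_0w_1$ crosses some line $\ell$. In the arrangement, the distance $d(u,v)$ equals the number of lines separating $u$ and $v$: each line separating them must be crossed, and any shortest path crosses each line at most once. So $\ell$ separates $u$ from $v$. Bob plays $\ell$, and Alice must remove a pebble from one of $u,v$ and add one to the other, so $p(u)+p(v)$ is unchanged. This alone does not make progress, so Bob must replace the pair. If Alice removes from $u$ (adding to $v$), consider the pair $(w_1,v)$. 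Its distance is $d-1$, but we know nothing about $p(w_1)$. The induction therefore cannot be on a single pair; it has to track the whole path.

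The fix is to strengthen the statement to a path version. Suppose $w_0,\dots,w_d$ is a shortest path whose consecutive boxes are separated by lines $\lambda_1,\dots,\lambda_d$, all distinct. Each $\lambda_i$ has $u=w_0$ on one side and $v=w_d$ on the other. Bob plays these lines one at a time, and each move is an exchange between the two ends: Alice either takes from $u$'s side and gives to $v$'s side, or the reverse. Call a move \emph{toward $v$} if pebbles are removed from $u$'s side. After $k$ moves toward $v$ and $m$ moves toward $u$, we have $p(u)$ changed by $m-k$ and $p(v)$ changed by $k-m$. The plan is to mirror Bob's stage strategy from \cref{lem:formula-LB}, which is essentially the 1D game along the path. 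Bob keeps a pointer $i$ and plays $\lambda_i$. If Alice moves toward $v$, Bob advances to $\lambda_{i+1}$; otherwise he retreats to $\lambda_{i-1}$. Treat this as a walk on positions $0,\dots,d$, with $u$ at one end and $v$ at the other. Each line is played alternately, and the net effect of repeatedly playing a single line is just a $\pm1$ shift. Let $a$ be the number of lines currently ``pushed'' (net decreased toward $v$). Then $p(u)$ has dropped by $a$, and $p(v)$ has dropped by the number of lines net-pushed toward $u$. If Bob cycles through the whole sequence forward and backward, $p(u)+p(v)\le d+1$ forces one of them to hit $0$. The reason is that once each line has been played and its orientation fixed, say $a$ lines toward $v$ and $d-a$ toward $u$, Bob can replay the $a$ lines and then the $d-a$ lines. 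Alice's alternation means she cannot reverse all of them without lowering the other endpoint.

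The main obstacle is exactly this ``replaying'' argument: Alice is not forced to alternate, so the endpoints' pebble counts need not behave monotonically. I would try to resolve it by noting that the game restricted to the two boxes $u,v$ and the $d$ separating lines is a 1D game with two boxes and $d$ identical moves. Every separating line acts on $(p(u),p(v))$ as $(-1,+1)$ or $(+1,-1)$. Against an adversary playing only these lines, the pair behaves like one token on a path of length $p(u)+p(v)$. But a single line lets Alice oscillate forever, so Bob must use the other boxes on the path as intermediaries. Hence I would do the induction on $d$ as described, with the refined hypothesis that the path's interior boxes have at most $d-1$ pebbles beyond one each. Then at each step Bob chooses either the line at the $u$-end or the one at the $v$-end, reducing to a small pair at distance $d-1$ according to Alice's answer. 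The choice is made by comparing $p(w_1)$ against $p(u)$, and making that comparison work is the step I expect to be hardest.
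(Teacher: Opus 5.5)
Your setup matches the paper's. A shortest path $u=w_0,\dots,w_d=v$ crosses $d$ distinct lines $\lambda_1,\dots,\lambda_d$, with $\lambda_i$ separating $w_{i-1}$ from $w_i$, and each of these lines separates $u$ from $v$. While Bob plays only these lines, $p(u)+p(v)$ is invariant. However, the proposal never reaches a strategy for Bob that provably wins, and the one concrete rule you give runs the wrong way.

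The missing idea is to tie the pointer to $p(u)$: Bob plays $\lambda_{p(u)}$. This is well defined until a box empties, because $p(v)\ge 1$ and $p(u)+p(v)\le d+1$ force $1\le p(u)\le d$. After Alice takes from $u$'s side, the pointer therefore moves \emph{toward} $u$. This is the correct mirror of the stage strategy: there the order makes the focal box's side shrink, which here means going from the $v$-end toward $u$. A monotone run of Alice's answers ends with $p(u)=0$ or $p(v)=0$. Each reversal (either $\lambda_i$ from $u$'s side then $\lambda_{i-1}$ from $v$'s side, or $\lambda_i$ from $v$'s side then $\lambda_{i+1}$ from $u$'s side) restores $p(u)$ and every path box except the one between the two lines, which loses two pebbles. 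Cancelling reversals as on a stack, the path counts equal those of a monotone walk minus $2$ at the corresponding box per cancelled reversal. So Alice can afford only finitely many reversals. This is the paper's argument; note that the box that finally empties may be an interior one.

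Your rule (advance toward $v$ after Alice takes from $u$'s side) has the opposite effect. The reversal ``$\lambda_i$ from $u$'s side, then $\lambda_{i+1}$ from $v$'s side'' \emph{adds} two pebbles to $w_i$ and leaves the rest of the path unchanged. Take two parallel lines with $(p(u),p(w_1),p(v))=(2,1,1)$, which is a small pair, and start your pointer at $\lambda_1$. Alice answers $u$'s side at $\lambda_1$ and $v$'s side at $\lambda_2$ forever: $(2,1,1)\to(1,2,2)\to(2,3,1)\to(1,4,2)\to\cdots$. Your ``replaying'' argument presupposes that Alice alternates, which she need not do.

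The induction on $d$ you fall back on adds a hypothesis on the interior boxes that the lemma does not supply, since they may hold arbitrarily many pebbles. So a proof of that refined statement would not give the lemma. The comparison of $p(w_1)$ with $p(u)$ that you flag as hardest is also left open. Indeed, if Bob plays $\lambda_1$ and Alice takes from $u$'s side, $p(w_1)+p(v)$ rises by $2$, so the sub-pair $(w_1,v)$ need not become small.
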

\begin{proof}
    This proof is a generalization of known facts in the \(1d\) setting~\cite{imo2019comittee60thInternationalMathematical2019}. As these solutions have not been formally published, we include a proof for completeness.

    Let \(u,v\) be the small pair and \(\pi(u,v) = (u=b_{1},\dots, b_{k}=v) \) be a shortest path that connects \(u\) and \(v\). 
    Clearly, \(\pi(u,v)\) crosses each line that separates \(u\) from \(v\) at least once. 
    On the other hand, there is path defined by a line segment between any fixed point in \(u\) and any fixed point in \(v\), and this segment crosses every line that separates \(u\) from \(v\) exactly once.
    Both observations together imply that the edges in \(D(\mathcal{L})\) in \(\pi(u,v)\) are each defined by a different line of~\(\mathcal{L}\). Without loss of generality, let these lines be \(\ell_{1},\dots, \ell_{k-1}\).

    Let \(p_t(u), p_t(v)\) be the number of pebbles in \(u\) and \(v\), respectively, after \(t\) rounds of the game. 
    Bob will now restrict the lines he plays to \(\ell_1,\dots, \ell_{k-1}\). This directly implies that in each move of Alice, one pebble is added to either \(u\) or \(v\) and one pebble is removed from the other box.
    This gives the invariant for all rounds \(t, t'\):
    \begin{equation}
        p_{t}(u) + p_{t}(v) = p_{t'}(u) + p_{t'}(v)\label{inv:local_bob1}
    \end{equation}
    Bobs strategy is to play line \(\ell_{p_{t-1}(u)}\) in round \(t\).
    Note, that this implies, that \(p_t(u) \in \{p_{t-1}(u) -1, p_{t-1}(u) +1\}\).

    We show that Bob has a winning strategy by induction over the number of times where Alice switches between subtracting from all boxes on the same side as \(u\) and all boxes on the same side of~\(v\).
    For the base case, first assume that Alice always subtracts from the side that contains \(u\).
        Then after \(t\) rounds we have  \(p_t(u) = p_0(u)-t\) and Bob played line \(\ell_{p_0(u)-t}\).
        In round \(t^*=p_0(u)-1\), Bob plays \(\ell_1\) and after this round it holds that \(
        p_{t^*}(u) = p_0(u) - p_0(u)=0\) and Bob wins.
        
        If Alice always picks the side that contains \(v\), then we have \(p_t(u) = p_0(u)+t\). In round \(t^* = p_0(v)\), Bob plays line \(\ell_{p_0(u)+p_0(v)-1}\) and since \(u,v\) are a small pair, this is one of the lines defined by \(\pi(u,v)\). Furthermore, we have \(p_{t^*}(u) = p_0(u) + p_0(v)\) and by (\ref{inv:local_bob1}) this implies that \(p_{t^*}(v) = 0\), again Bob wins. 

        For the inductive step assume, without loss of generality, that up to round \(t\) Alice always took pebbles away from the side with box \(u\) and at round \(t+1\) she takes away from box \(v\).
        Let \(\ell_{i}\) be the line played in step \(t\). Then Bob plays line \(\ell_{i-1}\) in step \(t+1\).
        These two steps give \(p_{t-1}(b_j) = p_{t+1}(b_j)\) for  \(j\neq i\) since each of these boxes is once on the side where the pebbles are subtracted and once on the side where they are added. 
        However, we have \(p_{t+1}(b_i) = p_t(b_i)-1 = p_{t-1}(b_i)-2\).
        So Bob is in the same situation as two rounds before, that is, he will next play line $\ell_{i}$. But in total there are two pebbles less in the boxes along the path $\pi(u,v)$ than two rounds before. 
        In particular, \(u,v\) are still a too-small pair and by the inductive hypothesis, Bob wins.
\end{proof}

The second ingredient is a version of a $\le k$-level Theorem that can be stated for sets of points, or dually for the line arrangements, see also a review~\cite[Section 1.3]{wagner2008k}.

\begin{theorem}[Theorem 1 in~\cite{alon1986number}]\label{thm:k-level}
    Let $\mathcal L$ be an arrangement of $n$ lines, $D(\mathcal L)=(V,E)$ its dual graph, and $1\le r<n/2$. Then for any $u\in V$ we have
    \(| \{ v\in V\mid d(u,v)\le r \}| \le r\cdot n. \)

\end{theorem}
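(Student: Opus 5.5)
The plan is to reduce the statement to the classical bound on the number of $\le k$-sets of a planar point set, or equivalently on the complexity of the $\le k$-level of a line arrangement. We may assume the arrangement is simple; otherwise we perturb it, which changes neither $n$ nor the relevant counts in the wrong direction. The first step is the same observation as in the proof of \cref{lem:local_bob1}: the dual-graph distance $d(u,v)$ equals the number of lines of $\mathcal L$ separating $u$ from $v$. A shortest path crosses each separating line at least once and no line twice, and a straight segment achieves this. So we must bound the number of cells separated from $u$ by at most $r$ lines.

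Next we normalize by a projective transformation. Fix a point $o$ in the interior of $u$ and a line $m$ through $o$ in general position with respect to $\mathcal L$. Apply a projective map sending $m$ to the line at infinity, chosen so that $o$ goes to the vertical point at infinity. The lines of $\mathcal L$ become non-vertical lines. A cell $v$ not met by $m$ becomes a cell whose number of separating lines from $u$ is exactly the number of lines passing above it, i.e.\ its level. A cell met by $m$ splits into two pieces, and we count both pieces. Splitting only increases the count, so an upper bound on the number of cells of level at most $r$ in the new arrangement bounds the original quantity. Thus we need:
\[
\#\{\text{cells of level}\le r\}\le rn .
\]

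To count these cells, we charge each cell to its lowest point. Every cell of level $j\ge 1$ is either bounded below by a unique lowest vertex, or it is unbounded downwards. The downward-unbounded cells are the at most $n+1$ cells of the lower envelope region, namely the cells crossed by a horizontal line far below. In the bounded case, the lowest vertex lies on the two lines bounding the cell from below, and exactly $j-1$ lines pass above it. So cells of level $\le r$ inject into vertices of level $\le r-1$, plus a linear number of extra cells. Dually, vertices of level $\le k$ correspond to pairs of points spanning a line with at most $k$ points on one side. These are counted by the $\le k$-edge bound: at most $(k+1)n$ for $k<n/2-1$, and in the sharp Alon--Győri form the count of $\le k$-sets is at most $kn$.

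The main obstacle is getting the exact constant $rn$ rather than $O(rn)$. Cubic estimates would be fine with $O(rn)$, but the statement as given requires the sharp form. I would try first to quote the Alon--Győri theorem that the number of $\le k$-sets of $n$ points in general position is at most $kn$ for $k<n/2$. I would then set up the bijection so that cells within distance $r$ of $u$ correspond exactly to $\le r$-sets, with the empty set matched to $u$ itself, rather than passing through vertex counts, which loses additive terms. Concretely, under point--line duality, a cell $v$ corresponds to the subset of dual points separated from $u$'s side. Cells correspond to distinct linearly separable subsets of size $d(u,v)$, which yields the bound directly. Care is needed with the cells split by $m$ and with the restriction $r<n/2$, which is exactly where the $\le k$-set bound applies.
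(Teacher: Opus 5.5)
The paper gives no proof of this statement; it is quoted from Alon and Gy\H{o}ri. Reducing to their bound on $\le k$-sets is the right idea. However, the target you set yourself is false as printed, and the step where you claim to reach it is where the error sits. The Alon--Gy\H{o}ri bound $g_k(S)\le kn$ for $k<n/2$ counts \emph{nonempty} subsets cut off by a line. Indeed, $n$ points in convex position already have exactly $kn$ of them, so the empty set cannot be among the sets counted. Matching $u$ to the empty set therefore yields $|\{v\in V\mid d(u,v)\le r\}|\le rn+1$, not $rn$. This cannot be improved. Take three lines bounding a triangle $u$ and $r=1<3/2$: there are $4>3=rn$ cells at distance at most $1$. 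More generally, any cell bounded by all $n$ lines violates the printed bound for $r=1$. So the sharp constant you were aiming for does not exist. What is true is $|\{v\ne u\mid d(u,v)\le r\}|\le rn$, and the extra $+1$ is harmless in the proof of \cref{thm:cubic}.

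Your projective normalization is also incorrect as described. Send a line $m$ through $o\in u$ to infinity, with $o$ going to the vertical point at infinity. Then $u$ becomes two cells, the topmost and the bottommost one. For a cell $v$ on the side of $m$ that goes to the top, $d(u,v)$ is indeed the number of lines above $v$. For a cell on the other side, however, $d(u,v)$ is the number of lines \emph{below} $v$, and $v$ has $n-d(u,v)>r$ lines above it. Such cells are therefore missed by your count of cells of level $\le r$. Also, the old line at infinity becomes an ordinary line, so old unbounded cells are glued to pieces of their antipodal cells rather than merely split.

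The clean way to get the correspondence of your last paragraph is polar duality centred in $u$. Put the origin at $o\in u$ and write each line as $\ell=\{x\mid\langle a_\ell,x\rangle=1\}$, and let $S=\{a_\ell\mid\ell\in\mathcal L\}$. Then $\ell$ separates a point $q$ from $o$ iff $\langle a_\ell,q\rangle>1$. So the cell $v\ni q$ determines $T_v=S\cap\{x\mid\langle x,q\rangle>1\}$. This set is cut off by an open half-plane, $|T_v|=d(u,v)$, and $v\mapsto T_v$ is injective because distinct cells have distinct sign vectors. Hence the cells $v\ne u$ with $d(u,v)\le r$ inject into the nonempty $\le r$-sets of $S$, of which there are at most $rn$ by Alon--Gy\H{o}ri. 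A degenerate $S$ can be perturbed first, since perturbation destroys no $\le r$-set.

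A minor slip as well: in your lowest-vertex charging, the lowest vertex of a cell with $j$ lines above it also has exactly $j$ lines strictly above it, not $j-1$.
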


Finally we are ready to prove \cref{thm:cubic}.

\begin{proof}[Proof of \cref{thm:cubic}]
First, note that any autopilot strategy uses $\mathcal{O}(n^3)$ pebbles: Indeed, in every autopilot distribution, all boxes receive at most $n+1$ pebbles. Since for a line arrangement in general position the number of boxes is $\binom{n+1}{2}+1$, Alice can distribute $\mathcal{O}(n^3)$ pebbles and prevent Bob from winning.

Next, we prove the lower bound.
Let $p$ be any pebble configuration such that Alice wins.
If each box has at least $\frac1{10}n$ pebbles then the total number of pebbles is at least $\frac1{10}n\cdot \frac12n^2 =\Omega(n^3)$ and we are done.
Suppose otherwise and let $b\in B$ be a box with fewer than $\frac1{10}n$ pebbles.
Fix a distance $r=\frac13n$.
By~\cref{thm:k-level} there are at most $r\cdot n=\frac13n^2$ boxes at distance at most $r$ from $b$, so there are at least $\frac12n^2-\frac13n^2=\frac16n^2$ boxes at distance at least $\frac13n$ from $b$. 
By~\cref{lem:local_bob1}, each of those boxes must have at least $r-n/10\ge \frac{7}{30}n$ pebbles, else Bob wins. 
Thus, the total number of pebbles is at least $\frac16 n^2\cdot \frac7{30}n = \Omega(n^3)$.
\end{proof}

\section{Discussion}
\begin{figure}
    \centering
    \includegraphics[width=0.8\linewidth]{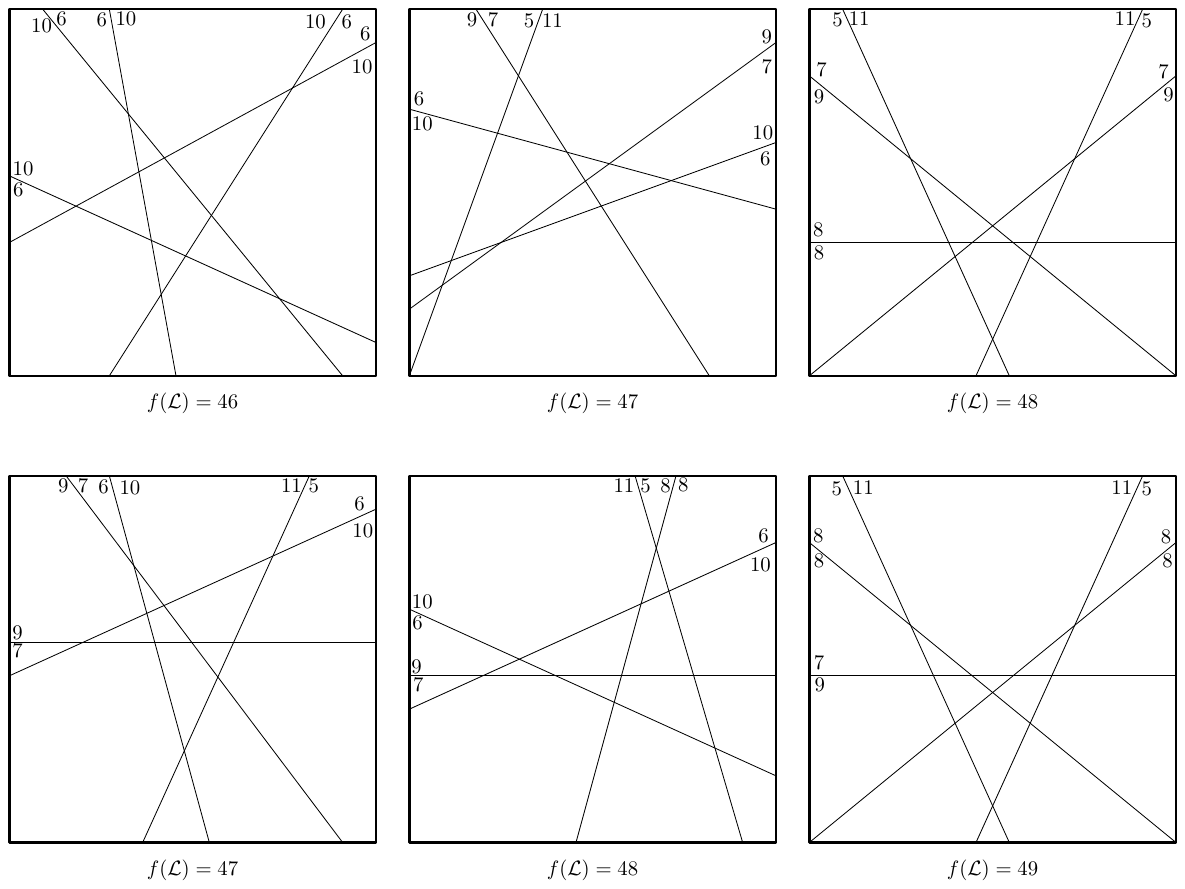}
    \caption{The six combinatorially different simple line arrangements for $n=5$ lines~\cite{simple_arrangements}. Shown is the number of cells on both sides of each line. The optimal autopilot for each arrangement requires $46\le f(\mathcal{L})=\sum_{\ell\in\mathcal{L}} c_\ell + 16\le 49$ pebbles.} %\pepa{I checked this.}
    \label{fig:n5}
\end{figure}

There are many directions for future research, we briefly mention two of them:

First, given an integer $n$, determine the smallest and the largest possible value of $f(\mathcal L_n)$, where $\mathcal L_n$ is an arrangement of $n$ lines in general position.
For example, when $n=5$ a check of all combinatorially different simple line arrangements~\cite{simple_arrangements} shows that $46\le f(\mathcal{L})\le 49$, see~\cref{fig:n5}.

Second, is there a polynomial-time algorithm that, given a line arrangement and the pebbles in the boxes, decides whether Bob has a winning strategy? We believe we can show that the answer is yes if some box contains a single pebble, but we don't know about the general case.

 \subparagraph{Acknowledgment} We would like to thank the organizers of GGWeek 2025, where this research started.

\bibliography{GGT}

\end{document}